\newcommand{\forces}{\!\Vdash\!}
\newcommand{\imp}{\!\rightarrow\!}
\newcommand{\proves}{\vdash}
\newcommand{\bk}{\ensuremath{\mathbf{K}}}
\newtheorem{notat}{\bf Notation.}
\newtheorem{Comment}{\bf Comment}
\begin{document}

\setcounter{page}{45}
\publyear{22}
\papernumber{2118}
\volume{186}
\issue{1-4}

    \finalVersionForARXIV

\title{Towards Syntactic Epistemic Logic}

\author{Sergei Artemov\thanks{Address for correspondence: The Graduate Center, The City University of New York,
                               365 Fifth Avenue, New York City, NY 10016, USA.}
                  \\
  The Graduate Center, The City University of New York\\
  365 Fifth Avenue, New York City, NY 10016, USA\\
  sartemov@gc.cuny.edu
  }

\runninghead{S. Artemov}{Towards Syntactic Epistemic Logic}

\maketitle

\begin{abstract}
Traditionally, Epistemic Logic represents epistemic scenarios using a single model. This, however, covers only complete descriptions that specify truth values of \textit{all} assertions.  Indeed, many---and perhaps most---epistemic descriptions are not complete. Syntactic Epistemic Logic, {\sf SEL}, suggests viewing an epistemic situation as a set of syntactic conditions rather than as a model. This allows us to naturally capture incomplete descriptions; we discuss a case study in which our proposal is successful.  In Epistemic Game Theory, this closes the conceptual and technical gap, identified by R.~Aumann, between the syntactic character of game-descriptions and semantic representations of games.
\end{abstract}

\section{\bf Introduction}\label{intro}

In this paper, we argue for a paradigm shift in the way that logic and epistemic-related applications -- in particular, game theory -- specify epistemic scenarios.\footnote{The preliminary version of this paper was delivered as an invited talk at the 15th LMPS Congress in 2015 \cite{Art15}.}
Given a verbal description of a situation, a typical epistemic user cherrypicks a ``natural model" (Kripke or Aumann's) and then regards it as a formalization of the original description. This approach carries with it two fundamental deficiencies:
\begin{quote}
I.  It covers only complete descriptions, whereas many (intuitively most) epistemic situations are partially described and cannot be adequately specified by a single model.\footnote{Epistemic logicians have been mostly aware of (I) but this did not stop the wide spread culture of identifying an epistemic scenario with a single Kripke model (or Aumann structure in Game Theory).}
\medskip\par
II. The traditional epistemic reading of Kripke/Aumann models requires common knowledge of the model which restricts their generality and utility even further.
\end{quote}

\subsection{Overspecification}

A typical case of (I)  is the overspecification problem. Consider the following description:
\begin{equation}\label{coin}
\mbox{\it A tossed coin lands heads up. Alice sees the coin, Bob does not.}
\end{equation}
Students in an epistemic logic class normally produce a Kripke {\sf S5}-model of this situation as in Figure~1.

\begin{figure}[!h]
\begin{center}
\mbox{
\begin{xy}
(20,4)*{2};
(-15,4)*{1};
(-15,-4)*{h};
(19,-4)*{\neg h};
(3,3)*{R_B};
(-9,8)*{R_{A,B}};
(26,8)*{R_{A,B}};
(20,0)*+{\bullet}="2";
(-15,0)*+{\bullet}="1";
{\ar "1";"2"};
{\ar "2";"1"};
{\ar@(ul,ur) "1";"1"};
{\ar@(ul,ur)"2";"2"};
\end{xy}
}\end{center}\vspace*{-3mm}
\caption{Model $\mathcal{M}_1$.}
\end{figure}
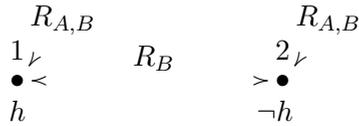
\noindent
In this model, there are two possible worlds 1 and 2, arrows represent indistinguishability relations $R_A$ and $R_B$ between worlds, $h$ is a propositional letter for ``\textit{heads}," and node 1 represents the real world at which $h$ holds.

\medskip
$\mathcal{M}_1$ is a model of (\ref{coin}) which, however, overspecifies (\ref{coin}): in this model there are propositions which are true but do not follow from (\ref{coin}), e.g.,
\begin{itemize}
\itemsep=0.9pt
\item $\bk_A\neg\bk_B h$ - \textit{Alice knows that Bob does not know $h$};\footnote{$\bk_A$ and $\bk_B$ are knowledge modalities for Alice and Bob.}
\item $\bk_B(\bk_A h\vee\bk_A\neg h)$ - \textit{Bob knows that Alice knows whether $h$};
\item \textit{etc.}
\end{itemize}
We will see in Section~\ref{Kripke} that scenario (\ref{coin}) ``as is" does not have a single-model specification at all.

In a situation in which an epistemic scenario is described syntactically but formalized as a model, a completeness analysis relating these two modes is required. For example, the Muddy Children puzzle is given syntactically but then presented as a model tacitly presumed to be commonly known (cf. \cite{FHMV95,MH95,OR94,PvdH07,JvB14}). In Section~\ref{Muddy}, we show that this choice of a specifying model can be justified. However, the Muddy Children case is a fortuitous exception: see Sections~\ref{Muddy} and \ref{games} for more epistemic scenarios without single model specifications.

\medskip
Existing approaches to mitigate overspecification include
\begin{itemize}
\item Supervaluations:  given a syntactically defined situation $\cal S$, assume
\[ \mbox{\it ``$F$ holds in $\cal S$" iff ``$F$ is true in all models of $\cal S$."}
\]
This approach has been dominant in mathematical logic with formal theories as ``situations," and it manifests itself in G\"odel's Completeness Theorem.
\item  Non-standard truth values: Kleene three-valued logic or other, more exotic ways of defining truth values. This approach has generated mathematically attractive models, but it has neither dethroned the supervaluation tradition in mathematical logic, nor changed the ill-founded ``natural model culture" in epistemology.
\end{itemize}
Here we explore the supervaluation approach in epistemology by representing epistemic scenarios in a logical language syntactically and considering the whole class of the corresponding models, not just one cherrypicked model. This also eliminates problem (II).

\section{What is Syntactic Epistemic Logic?}

The name {\em Syntactic Epistemic Logic} was suggested by Robert Aumann (cf. \cite{Aum10}) who identified the conceptual and technical gap between the syntactic character of game descriptions and the predominantly semantic way of analyzing games via relational/partition models.

\medskip
Suppose the initial description ${\cal I}$ of an epistemic situation is syntactic in a natural language. The long-standing tradition in epistemic logic and game theory is then to proceed to a specific epistemic model ${\cal M}_{\cal I}$,
and take the latter as a mathematical definition of ${\cal I}$:
\begin{equation}\label{semantic} \mbox{\it informal description $\cal I$}\ \ \Rightarrow\ \mbox{\it ``natural model'' ${\cal M}_{\cal I}$.}
\end{equation}
Hidden dangers lurk within this process: a syntactic description $\cal I$ may have multiple models and picking one of them (especially declaring it common knowledge) is not generally sound. Furthermore, if we seek an exact specification, then only deductively complete scenarios can be represented (cf. Theorem~\ref{tmain}). Epistemic scenarios outside this group, which include situations with asymmetric and less-than-common knowledge (e.g., mutual knowledge) of conditions, do not have single-model presentations, but can be specified and handled syntactically.

\medskip
Through the framework of {\em Syntactic Epistemic Logic}, {\sf SEL}, we suggest making the syntactic formalization ${\cal S}_{\cal I}$ a formal definition of the situation described by $\cal I$:
\begin{equation}\label{syntactic} \mbox{\it description $\cal I$}\ \Rightarrow\ \mbox{\it syntactic formalization ${\cal S}_{\cal I}$}\Rightarrow \mbox{\it all of ${\cal S}_{\cal I}$'s models.}
\end{equation}
The first step from $\cal I$ to ${\cal S}_{\cal I}$ is formalization and it has its own subtleties which we will not analyze here.

The {\sf SEL} approach (\ref{syntactic}), we argue, encompasses a broader class of epistemic scenarios than a semantic approach (\ref{semantic}).
In this paper, we provide motivations and sketch basic ideas of Syntactic Epistemic Logic. Specific suggestions of general purpose formal systems is a work in progress, cf.~\cite{Art18}.

{\sf SEL} provides a more balanced view of the epistemic universe as being comprised of two inseparable entities, syntactic and semantic. Such a dual view of objects is well-established in mathematical logic where the syntactic notion of a formal theory is supplemented by the notion of a class of all its models. One could expect equally productive interactions between syntax and semantics in epistemology as well.

The definition of a game with epistemic conditions, cf. \cite{Aum76,Aum95}, was originally semantic in a single-model format. In more recent papers (cf. \cite{AA12,Aum10}), Aumann acknowledges the deficiencies of purely semantic formalizations and asks for some kind of ``syntactic epistemic logic" to bridge a gap between the syntactic character of game descriptions and the semantic way of analyzing games.

In this paper, we look at extensive games; the syntactic epistemic approach to strategic games has been tried in \cite{Art14}. However, neither of these papers considers Epistemic Game Theory in its entirety, including probabilistic belief models, cf. \cite{Bra14}; we leave this for future studies.


\section{Logical postulates and derivations}\label{postulates}
We consider the language of classical propositional logic augmented by modalities $\bk_i$, for agent $i$'s knowledge, $i=1,2,\ldots,n$.  For the purposes of this paper, we consider the usual ``knowledge postulates'' (cf. \cite{BRV01,CZ97,FHMV95,MH95,JvB14}) corresponding to the multi-agent modal logic ${\sf S5}_n$:\footnote{The same approach works for other epistemic modal logics.}
\begin{itemize}
\itemsep=0.85pt
\item classical logic postulates and rule {\em Modus Ponens} $A,A\imp B \proves B$;
\item distributivity: $\bk_i(A\imp B)\imp(\bk_i A\imp\bk_i B)$;
\item reflection: $\bk_i A\imp A$;
\item positive introspection: $\bk_i A\imp\bk_i\bk_i A$;
\item negative introspection: $\neg\bk_i A\imp\bk_i\neg\bk_i A$;
\item necessitation rule: $\proves A\ \ \Rightarrow\ \ \proves \bk_i A$.
\end{itemize}
A derivation in ${\sf S5}_n$ is a derivation from ${\sf S5}_n$-axioms by  ${\sf S5}_n$-rules ({\em Modus Ponens} and necessitation). The notation
\begin{equation}\label{ins5}
\proves A
\end{equation}
is used to represent the fact that $A$ is derivable in ${\sf S5}_n$.


\subsection{Derivations from hypotheses}

For a given set of formulas $\Gamma$ (here called ``hypotheses'' or ``assumptions") we consider {\em derivations from $\Gamma$}: assume all ${\sf S5}_n$-theorems, $\Gamma$, and use classical reasoning (rule {\em Modus Ponens}).  The notation
\begin{equation}\label{fromgamma}
\Gamma\proves A
\end{equation}
represents {\em $A$ is derivable from $\Gamma$}.

\medskip
It is important to distinguish the role of necessitation in reasoning without assumptions (\ref{ins5}) and in reasoning from a nonempty set of assumptions (\ref{fromgamma}). In (\ref{ins5}), necessitation can be used freely: what is derived from logical postulates ($\proves A$) is known ($\proves \bk_i A$). In (\ref{fromgamma}), the rule of necessitation is not postulated: if $A$ follows from a set of assumptions $\Gamma$, we cannot conclude that $A$ is known, since $\Gamma$ itself can be unknown. However, for some ``good'' sets of assumptions $\Gamma$, necessitation is a valid rule (cf. $\Gamma_3$ from Example~\ref{E2}, ${\sf MC}_n$ from Section~\ref{Muddy}).

\begin{example}\label{e0}
If we want to describe a situation in which proposition $m$ is known to agent 1, we consider the set of assumptions $\Gamma$:
$$\Gamma = \{\bk_1 m\}.$$
From this $\Gamma$, by reflection principle $ \bk_1 m \imp m$ from ${\sf S5}_n$,
 we can derive $m$,
 \[ \Gamma\proves m .\]
Likewise, we can conclude `1 knows that 1 knows $m$' by using positive introspection:
\[ \Gamma\proves\bk_1\bk_1 m .\]
 However, we cannot conclude that agent 2 knows $m$:
\[ \Gamma\not\proves\bk_2 m .\]
This is rather clear intuitively since when assuming `1 knows $m$,' we do not settle the question of whether `2 knows $m$.'\footnote{A rigorous proof of this non-derivability can be made by providing a counter-model.}
Therefore, there is no necessitation in this $\Gamma$, since we have $\Gamma\proves m$ but $\Gamma\not\proves \bk_2 m$.
\end{example}

\subsection{Common knowledge and necessitation}

We will also use abbreviations: for ``everybody's knowledge''
\[ {\bf E} X = {\bk_1}X\wedge\ldots\wedge {\bk_n}X, \]
and ``common knowledge''
\[ {\bf C}X = \{X,\ {\bf E}X,\ {\bf E}^2 X,\  {\bf E}^3 X,\ \ldots  \}.  \]
As one can see, ${\bf C} X$ is an infinite set of formulas. Since modalities $\bk_i$ commute with the conjunction, ${\bf C}X$ is provably equivalent to the set of all formulas which are $X$ prefixed by iterated knowledge modalities:
\[ {\bf C}X= \{P_1P_2\ldots P_k X \mid k=0,1,2,\ldots,\ \  P_i \in \{ \bk_1,\ldots,\bk_n\}\}.\]
Naturally, \[{\bf C}\Gamma=\bigcup \{{\bf C}F\mid F\in\Gamma\}\] that states ``$\Gamma$ is common knowledge.''

\begin{quote} The set of formulas ${\bf C}X$ emulates common knowledge of $X$ using the conventional modalities $\{ \bk_1,\ldots,\bk_n\}$. This allows us to speak, to the extent we need here, about common knowledge without introducing a special modality and new principles.
\end{quote}

 The following proposition states that the rule of necessitation corresponds to common knowledge of all assumptions. If $\Gamma,\Delta$ are sets of formulas, then $\Gamma\proves\Delta$ means $\Gamma\proves X$ for each $X\in\Delta$.

\begin{proposition}\label{ck=nec}  {\em A set of formulas $\Gamma$ is closed under necessitation if and only if
$\Gamma\proves {\bf C}\Gamma$, i.e., that $\Gamma$ proves its own common knowledge.}
\end{proposition}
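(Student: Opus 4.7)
The plan is to prove the two directions separately, with the forward direction being essentially immediate from the definitions and the reverse direction relying on the deduction theorem for the $\Gamma\proves\cdot$ relation.

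For the forward direction ($\Rightarrow$), suppose $\Gamma$ is closed under necessitation. For any $X\in\Gamma$ we trivially have $\Gamma\proves X$, so applying necessitation $k$ times yields $\Gamma\proves P_1 P_2\ldots P_k X$ for every choice of modalities $P_j\in\{\bk_1,\ldots,\bk_n\}$ and every $k\geq 0$. By the alternative description of $\mathbf{C}\Gamma$ given just before the proposition, this is exactly $\Gamma\proves\mathbf{C}\Gamma$.

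For the reverse direction ($\Leftarrow$), suppose $\Gamma\proves\mathbf{C}\Gamma$ and that $\Gamma\proves A$. I would first observe that since derivations from hypotheses in the sense of (\ref{fromgamma}) use only {\em Modus Ponens} as a rule, the classical deduction theorem applies: there exist $X_1,\ldots,X_m\in\Gamma$ with
\[
\proves (X_1\wedge\ldots\wedge X_m)\imp A
\]
(here $\proves$ is plain ${\sf S5}_n$-provability in the sense of (\ref{ins5})). Since necessitation is available in (\ref{ins5}), I can prefix a $\bk_i$; then distributivity and the fact that $\bk_i$ commutes with conjunction give
\[
\proves (\bk_i X_1\wedge\ldots\wedge \bk_i X_m)\imp \bk_i A .
\]
Because $\Gamma\proves\mathbf{C}\Gamma$ yields $\Gamma\proves\bk_i X_j$ for each $j$, a final application of {\em Modus Ponens} produces $\Gamma\proves\bk_i A$, which is the required closure under necessitation.

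The only nontrivial step is the invocation of the deduction theorem, and that step is clean precisely because the $\Gamma\proves\cdot$ calculus deliberately omits necessitation as a rule (as stressed in the discussion following (\ref{fromgamma})). So I do not expect any real obstacle; the content of the proposition is a compact repackaging of the standard observation that necessitation can be simulated modulo $\bk_i$-prefixed copies of the assumptions. \qed
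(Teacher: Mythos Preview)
Your proof is correct. The forward direction is identical to the paper's. For the reverse direction the paper argues by induction on the length of the derivation $\Gamma\proves X$ (base cases: $X$ an ${\sf S5}_n$-theorem, $X\in\Gamma$; inductive step: {\em Modus Ponens} via distributivity), whereas you invoke the deduction theorem to extract finitely many hypotheses $X_1,\ldots,X_m\in\Gamma$ and then necessitate the resulting ${\sf S5}_n$-theorem. The two arguments have the same underlying content, since the deduction theorem is itself proved by that very induction on derivations; your version is a clean black-box repackaging, while the paper's version is self-contained and avoids appealing to the deduction theorem as an external lemma.
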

\begin{proof} Direction `if.' Assume $\Gamma\proves {\bf C}\Gamma$ and prove by induction on derivations that $\Gamma\proves X$ yields $\Gamma\proves \bk_i X$. For $X$ being a theorem of ${\sf S5}_n$, this follows from the rule of necessitation in ${\sf S5}_n$. For $X\in\Gamma$, it follows from the assumption that $\Gamma\proves {\bf C}X$, hence $\Gamma\proves \bk_i X$. If $X$ is obtained from {\em Modus Ponens}, $\Gamma\proves Y\imp X$ and $\Gamma\proves Y$. By the induction hypothesis, $\Gamma\proves \bk_i(Y\imp X)$ and $\Gamma\proves \bk_i Y$. By the distributivity principle of ${\sf S5}_n$, $\Gamma\proves \bk_i X$.

\medskip
For `only if,' suppose that $\Gamma$ is closed under necessitation and $X\in\Gamma$, hence $\Gamma\proves X$. Using appropriate instances of the necessitation rule in $\Gamma$ we can derive $P_1P_2P_3,\ldots,P_k X$ for each prefix $P_1P_2P_3,\ldots,P_k$ with $P_i$ is one of $\bk_1,\bk_2,\ldots,\bk_n$. Therefore, $\Gamma\proves {\bf C}X$ and $\Gamma\proves {\bf C}\Gamma$.
\end{proof}

\section{Kripke structures and models}\label{Kripke}

A Kripke structure is a convenient vehicle for specifying epistemic assertions via truth values of atomic propositions and the combinatorial structure of the set of global states of the system.
A {\bf Kripke structure} $${\cal M}=\langle W,R_1,R_2,\ldots, \forces\rangle$$ consists of a non-empty set $W$ of possible worlds, ``indistinguishability'' equivalence relations $R_1,R_2,\ldots$ for each agent, and truth assignment  `$\ \forces\ $' of atoms at each world.  The predicate `{\em $F$ holds at $u$}' ($u\forces F$) respects Booleans and reads epistemic assertions as
\[ \mbox{\em $u\forces \bk_i F \ \ \ $ {\em iff} $\ \ \ $ for each state $v\in W$ with $uR_i v$, $v\forces F$ holds}. \]
Conceptually, `{\em agent $i$ at state $u$ knows $F$}' $(u\forces \bk_i F)$ encodes the situation in which $F$ holds at each state indistinguishable from $u$ for agent $i$.

\medskip
A {\bf model} of a set of formulas $\Gamma$ is a pair $({\cal M},u)$ of a Kripke structure ${\cal M}$ and a state $u$ such that all formulas from $\Gamma$ hold at $u$:
\[ {\cal M},u\forces F\ \ \mbox{\em for all $F\in\Gamma$.}\footnote{Here we mean {\em local models} when $\Gamma$ is satisfied at one world. This should not be confused with the more restrictive notion of {\em global models} when $\Gamma$ is satisfied at each world of the model (cf. \cite{BRV01,BvB07,Fit07}). For epistemic purposes, global models do not suffice: for example, a consistent and meaningful situation $\Gamma=\{m,\neg \bk m\}$, i.e., `$m$ holds but is not known' does not have global models.}\]
A pair $({\cal M},u)$ is an {\bf exact model} of $\Gamma$ if
\[ \Gamma\proves F\ \ \Leftrightarrow\ \ {\cal M},u\forces F .\]
An epistemic scenario (a set of ${\sf S5}_n$-formulas) $\Gamma$ admits a {\bf semantic definition} iff $\Gamma$ has an exact model.

\medskip
There is a simple criterion to determine whether $\Gamma$ admits semantic definitions (Theorem~\ref{tmain}) and we argue that ``most" epistemic scenarios lack semantic definitions. These observations provide a justification for Syntactic Epistemic Logic with its syntactic approach to epistemic scenarios.

A formula $F$ follows semantically from $\Gamma$, $$\Gamma\models F,$$ if $F$ holds in each model $({\cal M},u)$ of $\Gamma$.  A well-known fact connecting syntactic derivability from $\Gamma$ and semantic consequence is given by the {\bf Completeness Theorem}\footnote{There are many sources in which the proof of this theorem can be found, e.g., \cite{BRV01,BvB07,CZ97,FHMV95,Fit07,MH95,JvB14}.}:
\[ \Gamma\proves F\ \ \Leftrightarrow\ \ \Gamma\models F.
\]
This fact has been used to claim the equivalence of the syntactic and semantic approaches and to define epistemic scenarios semantically by a model. However, the semantic part of the Completeness Theorem
\[ \Gamma\models F \]
refers to the validity of $F$ in {\bf all} models of $\Gamma$, not in an arbitrary single model.

\medskip
We challenge the model theoretical doctrine in epistemology and show the limitations of single-model semantic specifications, cf. Theorem~\ref{tmain}.

\subsection{Canonical model}\label{canmod}

The Completeness Theorem claims that if $\Gamma$ does not derive $F$, then there is a model $({\cal M},u)$ of $\Gamma$ in which $F$ is false. Where does this model come from?

\medskip
The standard answer is given by the canonical model construction. In any model $({\cal M},u)$ of $\Gamma$, the set of truths $\cal T$ contains $\Gamma$ and is {\it maximal}, i.e., for each formula $F$,
\[ F\in{\cal T}\ \ \ \mbox{ or }\ \ \ \neg F\in{\cal T}. \]
This observation suggests the notion of a {\em possible world} as a maximal set of formulas $\Gamma$ which is consistent, i.e.,
$\Gamma\not\proves\bot$.

\medskip
A {\em canonical model} ${\cal M}({\sf S5}_n)$ of ${\sf S5}_n$ (cf. \cite{BRV01,BvB07,CZ97,FHMV95,Fit07,MH95}) consists of all possible worlds over ${\sf S5}_n$. Accessibility relations are defined on the basis of what is known at each world: for maximal consistent sets $\alpha$ and $\beta$,
\[ \mbox{$\alpha R_i \beta\ \ \ $ iff $\ \ \ \alpha_{\bk_i} \subseteq \beta$} \]
where $$\alpha_{\bk_i}=\{F\mid \bk_iF\in\alpha\},$$
i.e.,
\[ \mbox{\em all facts that are known at $\alpha$ are true at $\beta$}.\]
Evaluations of atomic propositions are defined accordingly:
\[ \alpha\forces p_i\ \ \mbox{ iff }\ \ p_i\in\alpha.\]
The standard Truth Lemma shows that Kripkean truth values in the canonical model agree with possible worlds: for each formula $F$,
\[ \alpha\forces F\ \ \mbox{ iff }\ \ F\in\alpha. \]

The canonical model ${\cal M}({\sf S5}_n)$ of ${\sf S5}_n$ serves as a parametrized universal model for each consistent epistemic scenario $\Gamma$. Given $\Gamma$, by the well-known Lindenbaum construction, extend $\Gamma$ to a maximal consistent set $\alpha$. By definition, $\alpha$ is a possible world in ${\cal M}({\sf S5}_n)$. By the Truth Lemma, all formulas from $\Gamma$ hold in $\alpha$:
\[ {\cal M}({\sf S5}_n),\alpha\ \forces\ \Gamma .\]

\subsection{Deductive completeness}

\begin{definition} A set of ${\sf S5}_n$-formulas $\Gamma$ is {\em deductively complete} if
\[ \Gamma \proves F\ \ \mbox{ or }\ \ \Gamma\proves\neg F .
\]
\end{definition}
\begin{example}\label{E2}
Consider examples in the language of the two-agent epistemic logic ${\sf S5}_2$ with one propositional variable $m$ and knowledge modalities $\bk_1$ and $\bk_2$.
\medskip\par
1. $\Gamma_1 = \{m\}$, where $m$ is a propositional letter. Neither $\bk_1m$ nor $\neg\bk_1m$ is derivable in $\Gamma_1$ and this can be easily shown on corresponding models. Hence $\Gamma_1$ is not deductively complete.\footnote{In classical logic without epistemic modalities, $\Gamma_1$ is deductively complete: for each modal-free formula $F$ of one variable $m$, either $\Gamma_1\proves F$ or $\Gamma_1\proves\neg F$.}
\medskip\par
2. $\Gamma_2 = \{{\bf E}m\}$, i.e., both agents have first-order knowledge of $m$. However, the second-order knowledge assertions, e.g., $\bk_2\bk_1m$, are independent,\footnote{Again, there are easy countermodels.}
\[ \Gamma_2\not\proves \bk_2\bk_1m\ \ \mbox{\ and }\ \ \Gamma_2\not\proves \neg \bk_2\bk_1m .\]
This makes $\Gamma_2$ deductively incomplete.
\medskip\par
3. $\Gamma_3 = {\bf C}m$, i.e., it is common knowledge that $m$. This set is deductively complete. Indeed, first note that, by Proposition~\ref{ck=nec}, $\Gamma_3$ admits necessitation:\footnote{which is not the case for $\Gamma_1$ and $\Gamma_2$.}
\[ \Gamma_3\proves F\ \ \Rightarrow\ \ \Gamma_3\proves \bk_i F,\ \ \ i=1,2.\]
To establish the completeness property: for each formula $F$,
\[ \Gamma_3\proves F\ \ \mbox{ or } \ \ \Gamma_3\proves\neg F, \]
run induction on $F$. The base case when $F$ is $m$ is covered, since $\Gamma_3\proves m$. The Boolean cases are straightforward.
Case $F=\bk_i X$.  If $\Gamma_3\proves X$, then, by necessitation, $\Gamma_3\proves \bk_iX$. If $\Gamma_3\proves\neg X$, then, since {\sf S5} proves $\neg X\imp \neg\bk_i X$, $\Gamma_3\proves\neg \bk_i X$.
\end{example}

\subsection{Semantic definitions and complete scenarios}

The following observation provides a necessary and sufficient condition for semantic definability. Let $\Gamma$ be a consistent set of formulas in the language of ${\sf S5}_n$.\footnote{The same criteria hold for any other normal modal logic which has a canonical model in the usual sense.}

\begin{theorem}\label{tmain} {\em
$\Gamma$ is semantically definable if and only if it is deductively complete.}
\end{theorem}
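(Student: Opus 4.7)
My plan is to prove the two directions separately, with the forward direction being essentially immediate and the reverse direction being a straightforward application of the canonical model machinery already set up in Section~\ref{canmod}.

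For the direction $(\Rightarrow)$, suppose $\Gamma$ is semantically definable, so there is an exact model $({\cal M},u)$ with $\Gamma\proves F \Leftrightarrow {\cal M},u\forces F$. Fix any formula $F$. Since $\forces$ at a single world respects classical Booleans, either ${\cal M},u\forces F$ or ${\cal M},u\forces \neg F$. Translating through the exactness biconditional, either $\Gamma\proves F$ or $\Gamma\proves\neg F$, so $\Gamma$ is deductively complete.

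For the direction $(\Leftarrow)$, suppose $\Gamma$ is deductively complete (and consistent, as assumed before the theorem). I would form the deductive closure $\overline{\Gamma}=\{F\mid\Gamma\proves F\}$ and observe that $\overline{\Gamma}$ is a maximal consistent set of ${\sf S5}_n$-formulas: it is consistent because $\Gamma$ is, and it is maximal because deductive completeness says that for every $F$, one of $F,\neg F$ is in $\overline{\Gamma}$. Therefore $\overline{\Gamma}$ is a world $\alpha$ of the canonical model ${\cal M}({\sf S5}_n)$. By the Truth Lemma recalled in Section~\ref{canmod},
\[ {\cal M}({\sf S5}_n),\alpha\forces F \ \ \Leftrightarrow\ \ F\in\alpha\ \ \Leftrightarrow\ \ \Gamma\proves F, \]
which exhibits $({\cal M}({\sf S5}_n),\alpha)$ as an exact model of $\Gamma$, witnessing semantic definability.

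The main substantive point, such as it is, is the observation that deductive completeness of $\Gamma$ plus consistency forces the deductive closure of $\Gamma$ to already be one of the points of the canonical model; once this is noted, exactness is handed to us by the Truth Lemma. No new combinatorial construction is required, and the consistency assumption on $\Gamma$ is used only to ensure that $\overline{\Gamma}$ is a bona fide possible world rather than the inconsistent set of all formulas.
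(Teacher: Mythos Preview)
Your proof is correct and matches the paper's argument essentially step for step: the forward direction uses bivalence at a single world to read off deductive completeness via the exactness biconditional, and the reverse direction identifies the deductive closure of $\Gamma$ with a point of the canonical model and invokes the Truth Lemma. The only cosmetic difference is notation ($\overline{\Gamma}$ versus the paper's $\widetilde{\Gamma}$).
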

\begin{proof}
The `only if' direction. Suppose $\Gamma$ has an exact model $({\cal M},u)$, i.e.,
\[ \Gamma\proves F\ \ \ \Leftrightarrow\ \ \ {\cal M},u\forces F. \]
The set of true formulas in $({\cal M},u)$ is maximal: for each formula $F$,
\[ {\cal M},u\forces F\ \ \mbox{ or }\ \ {\cal M},u\forces \neg F, \]
hence $\Gamma$ is deductively complete: for each $F$,
\[ \Gamma\proves F\ \ \mbox{ or } \ \ \Gamma\proves\neg F. \]

The `if' direction. Suppose $\Gamma$ is consistent and deductively complete. Then the deductive closure $\widetilde{\Gamma}$ of $\Gamma$
\[ \widetilde{\Gamma} = \{ F\mid \Gamma\proves F\},\]
is a maximal consistent set, hence an element of the canonical model ${\cal M}({\sf S5}_n)$. We claim that
$({\cal M}({\sf S5}_n),\widetilde{\Gamma})$ is an exact model of $\Gamma$, i.e., for each $F$,
\[ \Gamma\proves F\ \ \ \Leftrightarrow\ \ \ {\cal M}({\sf S5}_n),\widetilde{\Gamma}\forces F. \]
Indeed, if $\Gamma\proves F$, then $F\in\widetilde{\Gamma}$ by the definition of $\widetilde{\Gamma}$. By the Truth Lemma in
${\cal M}({\sf S5}_n)$, $F$ holds at the world $\widetilde{\Gamma}$. If $\Gamma\not\proves F$, then, by deductive completeness of $\Gamma$, $\Gamma\proves \neg F$, hence, as before, $\neg F$ holds at $\widetilde{\Gamma}$, i.e., ${\cal M}({\sf S5}_n),\widetilde{\Gamma}\not\forces F$.
\end{proof}

Theorem~\ref{tmain} shows serious limitations of semantic definitions. Since, intuitively, deductively complete scenarios $\Gamma$ are exceptions, ``most" epistemic situations cannot be defined semantically.\smallskip

In Section~\ref{MuddyExp}, we provide a yet another example of an incomplete but meaningful epistemic scenario, a natural variant of the Muddy Children puzzle, which, by Theorem~\ref{tmain} does not have a semantic definition, but can nevertheless be easily specified and analyzed syntactically.\smallskip

In Section~\ref{games}, we consider an example of an extensive game with incomplete epistemic description which cannot be defined semantically, but admits an easy syntactic analysis.

\section{The Muddy Children puzzle}\label{Muddy}
Consider the standard Muddy Children puzzle, which is formulated syntactically.
\begin{quote} {\em A group of $n$ children meet their father after playing in the mud. Their father notices that $k>0$ of the children have mud  on their foreheads. The children see everybody else's foreheads, but not their own. The father says: ``some of you are muddy,''  then adds: ``Do any of you know that you have mud on your forehead? If you do, raise your hand now.'' No one raises a hand. The father repeats the question, and again no one moves.
After exactly $k$ repetitions, all children with muddy foreheads raise their hands simultaneously. Why?}
\end{quote}

\subsection{Standard syntactic formalization}\label{SynForm}
This can be described in ${\sf S5}_n$ with modalities $\bk_1, \bk_2,\ldots,\bk_n$ for the children's knowledge and atomic propositions $m_1,m_2,\ldots, m_n$ with $m_i$ stating ``child $i$ is muddy.'' The {\bf initial configuration}, which we call ${\sf MC}_n$, includes common knowledge assertions of the following assumptions:
\medskip\par
1. \textit{Knowing about the others}:
\[ \bigwedge_{i\neq j}[\bk_i(m_j)\vee \bk_i(\neg m_j)].
\]

2. \textit{Not knowing about themselves}:
\[ \bigwedge_{i=1,\ldots,n}[\neg\bk_i(m_i)\wedge \neg\bk_i(\neg m_i)].
\]
Transition from the verbal description of the situation to ${\sf MC}_n$ is a straightforward formalization of a given syntactic description to another, logic friendly syntactic form.

\subsection{Semantic solution}\label{SemSol}
In the  standard semantic solution, the set of assumptions ${\sf MC}_n$ is replaced by a Kripke model: $n$-dimensional cube $Q_n$ (\cite{FHMV95,MH95,OR94,PvdH07,JvB14}). To keep things simple, we consider the case $n=k=2$.

 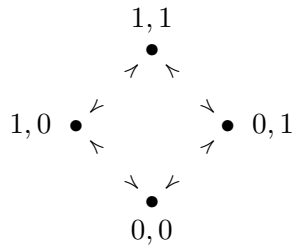
\begin{figure}[!h]
 \vspace{2mm}
\begin{center}
\mbox{
\begin{xy}
(-6,0)*{1,0};
(10,14)*{1,1};
(10, -14)*{0,0};
(26,0)*{0,1};
(0,0)*+{\bullet}="D"; 
(10,10)*+{\bullet}="E"; 
(10,-10)*+{\bullet}="F"; 
(20,0)*+{\bullet}="G"; 
{\ar@{.>} "D";"E"};
{\ar@{.>} "E";"D"};
{\ar "D";"F"};
{\ar "F";"D"};
{\ar@{.>} "F";"G"};
{\ar@{.>} "G";"F"};
{\ar"E";"G"};
{\ar"G";"E"};
\end{xy}
}\vspace*{-3mm}
\end{center}
\caption{Model $Q_2$.}
 \end{figure}

 \eject

 Logical possibilities for the truth value combinations\footnote{$1$ standing for `true' and $0$ for `false'} of $(m_1, m_2)$, namely (0,0), (0,1), (1,0), and (1,1) are declared possible worlds. There are two indistinguishability relations denoted by solid arrows (for 1) and dotted arrows (for 2).
It is easy to check that conditions 1 (knowing about the others) and 2 (not knowing about themselves) hold at each node of this model. Furthermore, $Q_2$ is assumed to be commonly known.

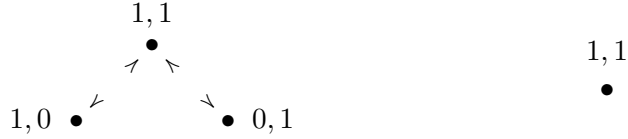
\begin{figure}[!h]
\begin{center}
\mbox{
\begin{xy}
(-6,0)*{1,0};
(10,14)*{1,1};
(26,0)*{0,1};
(0,0)*+{\bullet}="D"; 
(10,10)*+{\bullet}="E"; 
(20,0)*+{\bullet}="G"; 
{\ar@{.>} "D";"E"};
{\ar@{.>} "E";"D"};
{\ar "E";"G"};
{\ar "G";"E"};
(70,9)*{1,1};
(70,4)*+{\bullet}="F";
\end{xy}
}
\end{center}\vspace*{-3mm}
\caption{Models $\mathcal{M}_2$ and $\mathcal{M}_3$.}
 \end{figure}

After the father publicly announces $m_1\vee m_2$, node $(0,0)$ is no longer possible and model $\mathcal{M}_2$ now becomes common knowledge.
Both children realize that in $(1,0)$, child 2 would know whether (s)he is muddy (no other 2-indistinguishable worlds), and in $(0,1)$, child 1 would know whether (s)he is muddy.
After both children answer ``no" to whether they know what is on their foreheads, worlds $(1,0)$ and $(0,1)$ are no longer possible, and each child eliminates them. The only remaining logical possibility here is model ${\cal M}_3$.
Now both children know that their foreheads are muddy.

\subsection{Justifying the model}\label{Just}

The semantic solution in Section~\ref{SemSol} adopts $Q_n$ as a semantic equivalent of a theory ${\sf MC}_n$. Can this choice of the model be justified? In the case of Muddy Children, the answer is `yes.'

\medskip
Let $u$ be a node at $Q_n$. i.e., $u$ is an $n$-tuple of $0$'s and $1$'s and $u\forces m_i$ iff $i$'s projection of $u$ is $1$.
Naturally,  $u$ is represented by a formula $\pi(u)$:
\[ \pi(u) = \bigwedge \{m_i\mid u\forces m_i\} \wedge \bigwedge \{\neg m_i\mid u\forces \neg m_i\} .\]
It is obvious that $v\forces \pi(u)$ iff $v=u$.

\medskip
By ${\sf MC}_n(u)$ we understand the Muddy Children scenario with specific distribution of truth values of $m_i$'s corresponding to $u$:
\[ {\sf MC}_n(u) = {\sf MC}_n \cup \{ \pi(u) \}.\]
So, each specific instance of Muddy Children is formalized by an appropriate ${\sf MC}_n(u)$.

\begin{theorem}\label{tmc} {\em Each instance ${\sf MC}_n(u)$ of Muddy Children is deductively complete and $(Q_n,u)$ is its exact model
\[ {\sf MC}_n(u) \proves F\ \ \ \mbox{ iff }\ \ \ \ Q_n,u\forces F .\]
}
\end{theorem}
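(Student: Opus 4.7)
The plan is to prove the exactness equivalence ${\sf MC}_n(u)\proves F \Leftrightarrow Q_n,u\forces F$ directly by induction on formulas; deductive completeness of ${\sf MC}_n(u)$ then falls out as a corollary (in agreement with Theorem~\ref{tmain}). I split the argument into three stages: (i) verify that $(Q_n,u)$ is a model of ${\sf MC}_n(u)$; (ii) establish the strengthened inductive claim that, for every node $v$ of $Q_n$ and every formula $F$,
\[ Q_n,v\forces F\ \Rightarrow\ {\sf MC}_n\proves \pi(v)\imp F\qquad\text{and}\qquad Q_n,v\not\forces F\ \Rightarrow\ {\sf MC}_n\proves \pi(v)\imp \neg F; \]
(iii) instantiate at $v=u$ and invoke $\pi(u)\in {\sf MC}_n(u)$ to collapse the implications.

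Stage (i) is a direct check: both axioms of Section~\ref{SynForm} hold at every node of $Q_n$ by the construction of the cube, and since the $R_i$'s are equivalence relations with $Q_n$ connected under their union, every common knowledge prefix of either axiom also holds at $u$; combined with $\pi(u)$ this gives $(Q_n,u)\forces {\sf MC}_n(u)$, providing consistency and the soundness direction of exactness for free. The auxiliary fact needed for stage (ii) is that ${\sf MC}_n$, being the common knowledge closure of a finite set of assumptions, is closed under necessitation by Proposition~\ref{ck=nec}.

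Stage (ii) is the substance of the proof and proceeds by induction on $F$. The atomic case is immediate because $\pi(v)$ spells out the value of every $m_j$ at $v$; the Boolean cases are routine. For $F=\bk_i X$ with $Q_n,v\forces\bk_i X$, I enumerate the (finitely many) $R_i$-successors $w$ of $v$; each satisfies $Q_n,w\forces X$ and hence ${\sf MC}_n\proves \pi(w)\imp X$ by induction. Disjoining over these $w$ and applying necessitation and the $K$-axiom yields
\[ {\sf MC}_n\proves \bk_i\bigl(\textstyle\bigvee_{v R_i w}\pi(w)\bigr)\imp \bk_i X. \]
The remaining task is to derive $\pi(v)\imp \bk_i\bigl(\bigvee_{v R_i w}\pi(w)\bigr)$, which amounts to showing that from $\pi(v)$ agent $i$ knows every $m_j$-literal for $j\neq i$; this follows literal-by-literal from the ``knowing about the others'' disjunction together with reflection applied to the wrong disjunct.

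The delicate subcase, and the main obstacle I anticipate, is the negative modal step $Q_n,v\not\forces\bk_i X$, where a semantic counterexample must be turned into a syntactic possibility. Fix an $R_i$-successor $w$ of $v$ with $Q_n,w\not\forces X$; the induction hypothesis gives ${\sf MC}_n\proves \pi(w)\imp \neg X$, and necessitation, the $K$-axiom, and contraposition produce ${\sf MC}_n\proves \neg\bk_i\neg\pi(w)\imp \neg\bk_i X$. It therefore suffices to derive $\pi(v)\imp \neg\bk_i\neg\pi(w)$. When $w=v$ this is reflection. When $w$ differs from $v$ only in its $i$-th coordinate, $\pi(w)$ agrees with $\pi(v)$ on all literals except the $i$-th; assuming $\pi(v)\wedge\bk_i\neg\pi(w)$ and combining with the already-derived $\bk_i$ of the literals for $j\neq i$ forces $\bk_i$ of the $i$-th literal at $v$, which directly contradicts the ``not knowing about themselves'' axiom. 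This closes the induction; taking $v=u$ yields exactness, and deductive completeness then follows at once, since $Q_n,u$ makes every formula either true or false with a matching syntactic witness supplied on each side.
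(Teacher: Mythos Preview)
Your proposal is correct and follows essentially the same approach as the paper: both establish soundness of $(Q_n,u)$ directly, invoke Proposition~\ref{ck=nec} for necessitation in ${\sf MC}_n$, and prove the strengthened inductive statement ${\cal S}(F)$ over all nodes, with the modal cases driven by ``knowing about the others'' for the positive step and ``not knowing about themselves'' for the negative one. The only cosmetic differences are that the paper works with the truncated description $\pi(u)_{-i}$ (propositionally equivalent to your $\bigvee_{vR_iw}\pi(w)$, since $v$ has exactly the two $R_i$-successors $v$ and $v^i$) and, in the negative case, derives $\bk_i X\imp\bk_i m_i$ directly rather than via your diamond-style detour through $\neg\bk_i\neg\pi(w)$; the underlying contradiction with the ignorance axiom is identical.
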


\noindent \textbf{Proof:}\footnote{We have chosen to present a syntactic proof of Theorem~\ref{tmc}.  A semantic proof that makes use of bi-simulations can also be given.}
The direction `only if' claims that $(Q_n,u)$ is a model for ${\sf MC}_n(u)$ is straightforward. First, $Q_n$ is an ${\sf S5}_n$-model and all principles of ${\sf S5}_n$ hold everywhere in $Q_n$. It is easy to see that principles `knowing about the others' and `not knowing about himself' hold at each node. Furthermore, as $\pi(u)$ holds at $u$, everything that can be derived from ${\sf MC}_n(u)$ holds at $u$.

\medskip
To establish the `if' direction, we first note that, by Proposition~\ref{ck=nec}, necessitation is admissible in ${\sf MC}_n$: for each $F$,
\[ {\sf MC}_n \proves F\ \ \Rightarrow\ \ \ {\sf MC}_n \proves \bk_i F. \]
The theorem now follows from the statement ${\cal S}(F)$:
\begin{quote} {\em for all nodes $u\in Q_n$,
\[ Q_n,u\forces F \ \ \ \ \Rightarrow\ \ \ {\sf MC}_n\proves \pi(u)\imp F \]
and
\[ Q_n,u\forces \neg F \ \ \ \Rightarrow\ \ {\sf MC}_n\proves \pi(u)\imp \neg F . \]
}
\end{quote}
We prove that ${\cal S}(F)$ holds for all $F$ by induction on $F$.

\medskip
The case $F$ is one of the atomic propositions $m_1,m_2,\ldots,m_n$ is trivial since ${\sf MC}_n\proves \pi(u)\imp m_i$, if $u\forces m_i$ and ${\sf MC}_n\proves \pi(u)\imp \neg m_i$, if $u\forces \neg m_i$. The Boolean cases are also straightforward.

The case $F=\bk_i X$. Consider the node $u^i$ which differs from $u$ only at the $i$-coordinate. Without a loss of generality, we may assume that $u\forces m_i$ and $u^i\forces\neg m_i$; the alternative $u\forces \neg m_i$ and $u^i\forces m_i$ is similar.

\medskip
Suppose $Q_n,u\forces \bk_i X$. Then $Q_n,u\forces X$ and $Q_n,u^i\forces X$. By the induction hypothesis,
\[ \mbox{${\sf MC}_n\proves \pi(u)\imp X\ \ $ and $\ \ {\sf MC}_n\proves \pi(u^i)\imp X$. }\]
By the rules of logic (splitting premises)
\[ \mbox{${\sf MC}_n\proves \pi(u)_{-i}\imp (m_i \imp X)\ \ $ and $\ \ {\sf MC}_n\proves \pi(u)_{-i}\imp (\neg m_i\imp  X)$, }\]
where $\pi(v)_{-i}$ is $\pi(v)$ without its $i$-th coordinate\footnote{Formally, $\pi(v)_{-i}=\bigwedge \{m_j\mid v\forces m_j, \ j\neq i\}\wedge \bigwedge \{\neg m_j\mid v\forces \neg m_j, \ j\neq i\}$. }. By further reasoning,
\[ \mbox{${\sf MC}_n\proves \pi(u)_{-i}\imp  X$. }\]
By necessitation in ${\sf MC}_n$, and distributivity,
\[ \mbox{${\sf MC}_n\proves \bk_i\pi(u)_{-i}\imp  \bk_i X$. }\]
By `knowing about the others' principle, and since $\pi(u)_{-i}$ contains only atoms other them $m_i$,
\[ \mbox{${\sf MC}_n\proves \pi(u)_{-i}\imp  \bk_i\pi(u)_{-i}$, }\]
hence
\[ \mbox{${\sf MC}_n\proves \pi(u)_{-i}\imp  \bk_i X$, }\]
and
\[ \mbox{${\sf MC}_n\proves \pi(u)\imp  \bk_i X$. }\]

Now suppose $Q_n,u\forces \neg \bk_i X$. Then $Q_n,u\forces \neg X$ or $Q_n,u^i\forces \neg X$. By the induction hypothesis,
\[ \mbox{${\sf MC}_n\proves \pi(u)\imp \neg X\ \ $ or $\ \ {\sf MC}_n\proves \pi(u^i)\imp \neg X$. }\]
In the former case we immediately get ${\sf MC}_n\proves \pi(u)\imp \neg \bk_i X$, by reflection $\neg X\imp \neg \bk_i X$. So, consider the latter, i.e.,
${\sf MC}_n\proves \pi(u^i)\imp \neg X$.
As before,
\[ {\sf MC}_n\proves \pi(u)_{-i}\imp (\neg m_i\imp  \neg X). \]
By contrapositive,
\[ {\sf MC}_n\proves \pi(u)_{-i}\imp (X \imp  m_i). \]
By necessitation and distribution,
\[ {\sf MC}_n\proves \bk_i \pi(u)_{-i}\imp (\bk_i X \imp  \bk_i m_i). \]
By `knowing about others,' as before,
\[ {\sf MC}_n\proves \pi(u)_{-i}\imp (\bk_i X \imp  \bk_i m_i). \]
By `not knowing about himself,' ${\sf MC}_n\proves \neg \bk_i m_i$, hence
\[ {\sf MC}_n\proves \pi(u)_{-i}\imp \neg \bk_i X, \]
and
\[ {\sf MC}_n\proves \pi(u)\imp \neg \bk_i X. \smallskip \]

\vspace*{-2mm}

As we see, in the case of Muddy Children given by a syntactic description, ${\sf MC}_n(u)$, picking one ``natural model'' $(Q_n,u)$ could be justified. However, in a general setting, the approach
 \[  \mbox{\em given a syntactic description, pick a ``natural model''} \]
is intrinsically flawed: by Theorem~\ref{tmain}, in many (intuitively, most) cases, there is no model description at all. Furthermore, if there is a ``natural model,'' a completeness analysis in the style of what we did for ${\sf MC}_n$ in Theorem~\ref{tmc} is required.

\eject
\subsection{Incomplete scenario:  Muddy Children Explicit}\label{MuddyExp}

Here is a natural modification, ${\sf MCE}_{n,k}$, of the standard Muddy Children.
\begin{quote} \textit{A group of $n$ children meet their father after playing in the mud. Each child sees everybody else's foreheads. The father says: ``$k$ of you are muddy"
after which it becomes common knowledge that all children know whether they are muddy. Why?}
\end{quote}
This description does not specify whether children initially know if they are muddy; hence the initial description of ${\sf MCE}_{n,k}$ is, generally speaking, not complete\footnote{In particular, prior to father's announcement ${\sf MCE}_{2,2}$ does not specify whether $\bk_1 m_1$ holds or not.}. By Theorem~\ref{tmain}, the initial ${\sf MCE}_{2,2}$ is not semantically definable. Therefore,  ${\sf MCE}_{2,2}$ cannot be treated by ``natural model" methods.

\medskip
However, here is a syntactic analysis of ${\sf MCE}_{n,k}$ which can be shaped as a formal logical reasoning within an appropriate extension of ${\sf S5}_n$.
\begin{quote} After father's announcement, each child knows that if she sees $k$ muddy foreheads, then she is not muddy, and if she sees $k\!-\!1$ muddy foreheads, she is muddy: this secures that each child knows whether she is muddy. Moreover, everybody can reflect on this reasoning and this makes it common knowledge that each child knows whether she is muddy.
\end{quote}

\subsection{Some additional observations}

If we want to go beyond complete epistemic scenarios, we need a mathematical apparatus to handle classes of models, and not just single models. The format of syntactic specifications in some version of the modal epistemic language is a viable candidate for such an apparatus.

\medskip
The traditional model solution of ${\sf MC}_n$ without completeness analysis uses a strong additional assumption -- common knowledge of a specific model $Q_n$ and hence, strictly speaking, does not resolve the original Muddy Children puzzle; it rather corresponds to a different scenario of a more tightly controlled epistemic states of agents, e.g.,
\begin{quote} {\em A group of robots programmed to reason about model $Q_n$ meet their programmer after playing in the mud. ...}
\end{quote}

One could argue that the given model solution of ${\sf MC}_n$ actually codifies some deductive solution in the same way that geometric reasoning is merely a visualization of a rigorous derivation in some sort of axiom system for geometry. This is a valid point which can be made scientific within the framework of Syntactic Epistemic Logic.

\section{Syntactic Epistemic Logic and games}\label{games}
Consider a variant \textit{Centipede Lite}, {\sf CL}, of the well-known Centipede game (cf. \cite{OR94}) with risk-averse rational players Alice and Bob. No cross-knowledge of rationality, let alone common knowledge, is assumed!

\begin{figure}[!h]
\begin{center}\mbox{
\begin{xy}
(-80,10)="A";
(-80,6)*{2,1};
(-80,20)*+{\bullet}="B";
{\ar "B";"A"};
(-77,23)*{1(A)};
(-65,10)="C";
(-65,6)*{1,4};
(-65,20)*+{\bullet}="D";
{\ar "D";"C"};
(-62,23)*{2(B)};
{\ar "B";"D"};
(-50,10)="E";
(-50,6)*{4,3};
(-50,20)*+{\bullet}="F";
{\ar "F";"E"};
(-47,23)*{3(A)};
{\ar "D";"F"};
(-35,10)="G";
(-32,20)*{3,6};
(-35,20)*+{}="H";
{\ar "F";"H"};
\end{xy}
}\end{center}\vspace*{-4mm}
\caption{Centipede game tree}
\end{figure}
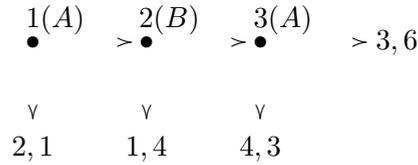

\medskip\noindent
{\sf CL} admits the following rigorous analysis.
 \begin{quote} At 3, Alice plays down. At 2, Bob plays down because he is risk-averse and cannot rule out that Alice plays down at 3 (since it is true). At 1, Alice plays down because she cannot rule out Bob's playing down at 2. So, {\sf CL} has the so-called Backward Induction solution {\it ``down at each node."}
\end{quote}
{\sf CL} is not complete (epistemic assumptions, such as
{\it Bob knows that Alice plays ``across" at 3}, are not specified), hence
{\sf CL} cannot be defined by a single Kripke/Aumann model.

\section{Incomplete and complete scenarios}

How typical are deductively incomplete epistemic scenarios? We argue that this is the rule rather than the exception. Epistemic conditions more flexible than common knowledge of the game and rationality (mutual knowledge of rationality, asymmetric epistemic assumptions when one player knows more than the other, etc.) lead to semantic undefinability.

Semantically non-definable scenarios are the ``dark matter'' of the epistemic universe: they are everywhere, but cannot be visualized as a single model. The semantic approach does not recognize these ``dark matter'' scenarios; {\sf SEL} deals with them syntactically.

The question remains: how manageable are semantic definitions of deductively complete scenarios?

\subsection{Cardinality and knowability issue}
Models of complete $\Gamma$'s provided by Theorem~\ref{tmain} are instances of the canonical model ${\cal M}({\sf S5}_n)$ at nodes $\widetilde{\Gamma}$ corresponding to $\Gamma$. This generic solution is, however, not satisfactory because of the highly nonconstructive nature of the canonical model ${\cal M}({\sf S5}_n)$.

As was shown in \cite{Aum99}, the canonical model ${\cal M}({\sf S5}_n)$ for any $n\geq 1$ has continuum-many possible worlds even with just one propositional letter. This alone renders models $({\cal M}({\sf S5}_n),\widetilde{\Gamma})$ not knowable under any reasonable meaning of ``known.'' The canonical model for ${\sf S5}_n$ is just too large to be considered known and hence does not {\it a priori} satisfy the knowability of the model requirement II from Section~\ref{intro}.

This observation suggests that the question about existence of an epistemically acceptable (``known'') model for a given deductively complete set $\Gamma$ requires a case-by-case consideration.

\subsection{Complexity considerations}

Epistemic models of even simple and complete scenarios can be prohibitively large compared to their syntactic descriptions. For example, the Muddy Children model $Q_n$ is exponential in $n$ whereas its syntactic description ${\sf MC}_n$ is quadratic in $n$.

Consider a real-life epistemic situation after the cards have been initially dealt in the game of poker. One can show that for each distribution of cards, its natural syntactic description in epistemic logic is deductively complete (\cite{AN15}) and hence admits a model characterization. Moreover, it has a natural finite model of the type given in \cite{FHMV95} with hands as possible worlds and with straightforward knowledge relations. However, with 52 cards and 4 players there are over $10^{24}$ different combinations of hands. This yields that explicit formalization of the model not practical. Players reason using concise syntactic descriptions of the rules of poker and of its ``large" model in the natural language, which can also be syntactically formalized in some kind of extension of epistemic logic.

In this and some other real life situations, models are prohibitively large whereas appropriate syntactic descriptions can be quite manageable.

\section{Further observations}
An interesting question is why the traditional semantic approach, despite its aforementioned shortcomings, produces correct answers in many situations. One of possible reasons for this is \textit{pragmatic self-limitation}.

Given a syntactic description $\cal D$, we intuitively seek a solution that logically follows from $\cal D$. Even if we reason on  a ``natural model'' of $\cal D$, normally overspecified, we try not to use features of the model that are not supported by $\cal D$. If we conclude a property $P$ by such self-restricted reasoning about the model, then $P$ indeed logically follows from $\cal D$.
\begin{quote}  This situation resembles Geometry, in which we reason about ``models", i.e., combinations of triangles, circles, etc., but have a rigorous system of postulates in the background. We are trained not to venture beyond given postulates even in informal reasoning.
\end{quote}
Such an {\em ad hoc} pragmatic approach needs a scientific foundation, which could be provided within the framework of Syntactic Epistemic Logic.

\section{Syntactic Epistemic Logic suggestions}

The Syntactic Epistemic Logic suggestion, in brief, is {\em to make an appropriate syntactic formalization of an epistemic scenario its formal specification}. This extends the scope of scientific epistemology and offers a remedy for two principal weaknesses of the traditional semantic approach. The reader will recall that those weaknesses were the restricting single model requirement and a hidden assumption of the common knowledge of this model.

 {\sf SEL} suggests a way to handle incomplete scenarios which have rigorous syntactic descriptions (cf.
Muddy Children Explicit, Centipede Lite, etc.).

{\sf SEL} offers a scientific framework for resolving the tension, identified by R.~Aumann \cite{Aum10}, between a syntactic description and its hand-picked model. If, given a syntactic description $\Gamma$ we prefer to reason on a model $\cal M$, we have to establish completeness of $\Gamma$ with respect to $\cal M$.

Appropriate syntactic specifications could also help to handle situations for which natural models exist but are too large for explicit presentations.

{\sf SEL}  can help to extend Epistemic Game Theory to less restrictive epistemic conditions. A broad class of epistemic scenarios does not define higher-order epistemic assertions and rather addresses individual knowledge, mutual and limited-depth knowledge, asymmetric knowledge, etc. and hence is deductively incomplete and has no exact single model characterizations. However, if such a scenario allows a syntactic formulation, it can be handled scientifically by a variety of mathematical tools, including reasoning about its models.


Since the basic object in {\sf SEL} is a syntactic description of an epistemic scenario rather than a specific model, there is room for a new syntactic theory of updates and belief revision.

\subsection*{Acknowledgements}
The author is grateful to Adam Brandenburger, Alexandru Baltag, Johan van Benthem, Robert Constable, Melvin Fitting, Vladimir Krupski, Anil Nerode, Elena Nogina, Eoin Moore, Vincent Peluce, Tudor Protopopescu, Bryan Renne, Richard Shore, and Cagil Tasdemir for useful discussions. Special thanks to Karen Kletter for editing early versions of this text.

\end{document}